
\documentclass[12pt]{amsart}
\usepackage{latexsym,amsmath,amsfonts,amscd,amssymb}
\usepackage{graphicx}
\usepackage{xcolor}
\textwidth 6in \oddsidemargin.2in \evensidemargin.2in
\parskip.2cm
\textheight20cm
\baselineskip.6cm

\newtheorem{theorem}{Theorem}[section]
\newtheorem{theorem*}{Theorem}
\newtheorem{lemma}[theorem]{Lemma}
\newtheorem{proposition}[theorem]{Proposition}
\newtheorem{proposition*}[theorem*]{Proposition}
\newtheorem{corollary}[theorem]{Corollary}
\newtheorem{corollary*}[theorem*]{Corollary}

\theoremstyle{remark}

\newtheorem{remark*}[theorem*]{Remark}

\newtheorem{note*}[theorem*]{Note}

\newcommand{\EE}{{\mathbb E}}

\newcommand{\NN}{{\mathbb N}}
\newcommand{\PP}{{\mathbb P}}

\newcommand{\RR}{{\mathbb R}}

\newcommand{\ZZ}{{\mathbb Z}}

\def\bitcoinA{%
  \leavevmode
  \vtop{\offinterlineskip 
    \setbox0=\hbox{B}%
    \setbox2=\hbox to\wd0{\hfil\hskip-.03em
    \vrule height .3ex width .15ex\hskip .08em
    \vrule height .3ex width .15ex\hfil}
    \vbox{\copy2\box0}\box2}}

\begin{document}

\title{On Profitability of Nakamoto double spend}

\subjclass[2010]{68M01, 60G40, 91A60m 33B20S.}
\keywords{Bitcoin, blockchain, proof-of-work, selfish mining,  martingale, glambler's ruin, random walk.}

\author{Cyril Grunspan}
\address{L{\'e}onard de Vinci, P{\^o}le Univ., Research Center, Paris-La D{\'e}fense, France}
\email{cyril.grunspan@devinci.fr}
\author{Ricardo P\'erez-Marco}
\address{CNRS, IMJ-PRG,  Paris, France}
\email{ricardo.perez.marco@gmail.com}
\address{\tiny Author's Bitcoin Beer Address (ABBA)\footnote{\tiny Send some anonymous and moderate satoshis to support our research at the pub.}:
1KrqVxqQFyUY9WuWcR5EHGVvhCS841LPLn} 
\address{\includegraphics[scale=0.2]{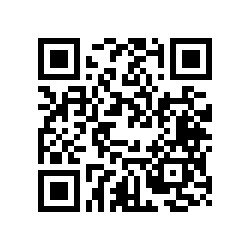}}

\begin{abstract} 
  Nakamoto double spend strategy, described in Bitcoin foundational article, leads to total ruin with positive probability 
  and does not make sense from the profitability point of view. The simplest strategy that can be profitable 
  incorporates a stopping threshold when success is unlikely.
  We solve and compute the exact profitability for this strategy. We compute the minimal amount of the double spend 
  that is  profitable. For a given amount of the transaction, we determine the minimal number of confirmations 
  to be requested by the recipient such that this
  double spend strategy is non-profitable. We find that this number of confirmations is only 1 or 2 
  for average transactions and a small hashrate of the attacker. This is 
  substantially lower  than the original Nakamoto numbers that are widely used and are only based on the success probability instead of the profitability.
\end{abstract}

{\maketitle}

\section{Introduction and background}

Satoshi Nakamoto's Bitcoin foundational article \cite{N08} describes Bitcoin protocol.  
Section 11 contains an analysis of Bitcoin security and 
estimates the probability of success of a double spend attack. 
Unfortunately the strategy proposed has a positive probability of total ruin for the attacker, hence it 
cannot be profitable.

The attacker with a relative hashrate $0<q<1/2$, attemps the double spend by sending 
a legitimate transaction and starts mining a secret fork with a conflicting transaction that invalidates the 
first one. The recipient requests $z\geq 0$ confirmations of the transaction to 
consider it definitive. The goal of the attacker is to catch-up the official blockchain after $z$ confirmations 
of the original transaction. He will only succeed 
with a probability 
that was computed exactly by the authors in \cite{GPM17}, correcting the original approximate 
Nakamoto formula given in \cite{N08}. If he does not succeed, he will be stuck for ever catching-up the official 
blockchain and will go broke. Hence, for a sound strategy, it is necessary to introduce 
a ``give-up'' threshold. If he lags by $A\geq z$ behind the official blockchain the attacker gives-up. Now, this $A$-nakamoto 
strategy (the precise definition in section \ref{sec:A-Nakamoto}) defines an integrable repetition game and 
fits in the general mining profitability 
theory developped by the authors in \cite{GPM2018}. 

According to \cite{GPM2018}, the profitability is compared using  the \textit{Revenue Ratio}
$$
\Gamma=\frac{\EE[\mathbf{R}]}{\EE[\mathbf{T}]}
$$
where $\mathbf{R}$ and $\mathbf{T}$ are random variables, $\mathbf{R}$ is the revenue and $\mathbf{T}$ the duration of the attack. For example,
for the honest strategy consisting of mining one block according to the protocol rules, we have
$\EE[\mathbf{R}_H]=qb$, where $b$ is the coinbase reward, and $\EE[\mathbf{T}_H] =\tau_0$, where 
$\tau_0$ the interblock time\footnote{In the current Bitcoin network $b=12.5 \, \bitcoinA$ and $\tau_0=10 \text{ min}$.}, thus the honest Revenue Ratio is 
\begin{equation*}
\Gamma_H=\frac{qb}{\tau_0} 
\end{equation*}

To compare the profitability of two full time mining strategies is equivalent to compare their Revenue Ratios
(Proposition 3.6 from section 3 in \cite{GPM2018}).
Therefore, the $A$-nakamoto 
strategy is profitable if and only if its Revenue Ratio $\Gamma_A$ is higher than the Revenue Ratio
of the honest strategy, $\Gamma_A>\Gamma_H$.

\medskip

We give an exact closed-form formula for the probability 
of success of the attack.

\medskip

\begin{theorem}\label{propa}
  Let $0<q<1/2$, respectively $p=1-q$, be the relative hashrate of the attacker, 
respectively of honest miners. We denote $\lambda =q/p <1$. 
Let $z\geq 1$ be the number of confirmations requested by the 
recipient of a transaction.  For  $A\geq z$, the probability $P_{A} (z)$ of success for the
$A$-Nakamoto double spend attack is
  $$
    P_{A} (z)  = 
	\frac{I_{4pq}(z,1/2) - \lambda^{A+1}}
	{1 - \lambda^{A+1}} 
  $$
where $I_x(a,b)$ is the Regularized Incomplete Beta function
 $$
 I_x(a,b)=\frac{\Gamma(a+b)}{\Gamma(a) \Gamma(b)} \int_0^x t^{a-1}(1-t)^{b-1} \, dt \ .
 $$
and  $\Gamma$ is Euler Gamma function.
\end{theorem}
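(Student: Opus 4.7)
The plan is to relate $P_A(z)$ to the no-give-up Nakamoto success probability (which equals $I_{4pq}(z,1/2)$ by the authors' earlier paper \cite{GPM17}) through a single strong Markov decomposition at the give-up time.

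Model the attack by the biased random walk $Y_t=M_t-N_t$, where $M_t$ (resp.\ $N_t$) is the number of blocks mined honestly (resp.\ by the attacker) after $t$ block-discovery events, so $Y_0=0$ and each step increases $Y$ by $1$ with probability $p$ and decreases it by $1$ with probability $q$. Since $p>q$, the walk has positive drift and almost surely hits every level $>0$ in finite time. Let $\tau_z=\inf\{t:M_t=z\}$ and let $S$ denote the no-give-up success event (``the attacker eventually catches up, at some time $\geq\tau_z$''), so that $P(S)=I_{4pq}(z,1/2)$ by \cite{GPM17}. Let $S_A\subseteq S$ be the corresponding event for the $A$-Nakamoto strategy, and let $T$ be the first time $Y$ reaches the give-up level.

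Write the disjoint union $S=S_A\sqcup(S\cap S_A^c)$. Reaching the give-up level (a lag of at least $A$) forces $M_T\geq A\geq z$ on $\{T<\infty\}$, so the ``$\geq \tau_z$'' clause is automatic at every time after $T$. Therefore on $S_A^c$ the event $S$ reduces, by the strong Markov property at $T$, to the event that the walk started afresh at the give-up level eventually catches up. By the classical gambler's-ruin identity for a biased walk with positive drift -- equivalently, by optional stopping applied to the martingale $\lambda^{Y_t}$ -- this conditional probability equals $\lambda^{A+1}$, the required downward displacement being $A+1$. Combining,
$$
I_{4pq}(z,1/2) \;=\; P(S) \;=\; P(S_A) + (1-P(S_A))\,\lambda^{A+1},
$$
and solving for $P(S_A)=P_A(z)$ gives the claimed closed-form formula.

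The substantive step is the strong Markov decomposition at $T$, whose cleanness hinges on the elementary bookkeeping observation $M_T\geq A\geq z$, which makes the $\tau_z$ constraint vacuous after the give-up time. The remaining ingredients -- the identification $P(S)=I_{4pq}(z,1/2)$ imported from \cite{GPM17}, and the gambler's-ruin computation yielding $\lambda^{A+1}$ -- are standard and do not need to be reproved here.
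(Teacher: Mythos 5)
Your argument is correct, and it takes a genuinely different route from the paper. The paper proves the theorem from scratch: it conditions on the number $j<z$ of attacker blocks mined by time $\mathbf{S}_z$, applies the two-barrier gambler's-ruin formula $\frac{\lambda^{z-j}-\lambda^{A+1}}{1-\lambda^{A+1}}$ for each $j$, and then evaluates the resulting negative-binomial sums through the incomplete-beta identities $\sum_{j<z}\PP[\mathbf{X}=j]=I_p(z,z)$, $\sum_{j<z}\PP[\mathbf{X}=j]\lambda^{z-j}=I_q(z,z)$, together with $I_p(z,z)+I_q(z,z)=1$ and $I_q(z,z)=\tfrac12 I_{4pq}(z,1/2)$. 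You instead bypass all of the special-function work by decomposing the no-give-up success event at the first hitting time $T$ of the give-up level: $P(S)=P(S_A)+(1-P(S_A))\lambda^{A+1}$, importing $P(S)=I_{4pq}(z,1/2)$ from \cite{GPM17} and using only the one-barrier ruin probability $\lambda^{A+1}$. This is not circular, since that identity is the main theorem of the earlier paper (with the same one-block pre-mine convention, as the authors note), and your key bookkeeping point --- that hitting lag $A\geq z$ forces $M_T>z$, so the ``after $z$ confirmations'' constraint is vacuous beyond $T$ and the strong Markov property applies cleanly --- is exactly what makes the decomposition legitimate. What your shortcut buys is brevity and conceptual transparency (it exhibits $P_A$ as a harmonic interpolation between $P_\infty$ and the ruin probability); what the paper's longer route buys is self-containedness and, more importantly, the explicit conditional sums over $j$, which are reused essentially verbatim in the subsequent computations of $\EE[\mathbf{T}_A]$ and $\EE[\mathbf{R}_A]$, where no analogous one-line decomposition is available. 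One presentational nit: from $Y_T=A+1$ you actually get $M_T\geq A+1$, not merely $M_T\geq A$; the conclusion $M_T\geq z$ is unaffected.
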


\medskip

This generalizes the main result from \cite{GPM17} that we obtain  taking the limit $A\to +\infty$. In the 
formula for $P_A(z)$ we have that  $A+1$ instead of $A$ because we assume that the attacker pre-mines one block
as it is implicit in Satoshi's paper (see Section \ref{sec:A-Nakamoto} below). 

\medskip

\begin{corollary}[G.-P.M., 2017]
The probability of success of the $\infty$-Nakamoto attack is
$$
P_{\infty} (z) = I_{4pq}(z,1/2) \ .
$$
\end{corollary}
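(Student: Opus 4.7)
The plan is to derive $P_{\infty}(z)$ directly as the limit of $P_A(z)$ from Theorem \ref{propa} as the give-up threshold $A$ is removed, following the remark immediately preceding the corollary.

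First I would establish the identification $P_{\infty}(z) = \lim_{A\to\infty} P_A(z)$ at the probabilistic level. Fix the realization of the block discovery process. The event $E_A$ that the attacker succeeds in the $A$-Nakamoto strategy is the event that the random walk describing the lag of the secret fork behind the honest chain returns to $0$ (i.e.\ the attacker catches up) before the lag reaches $A+1$. These events are nested, $E_A \subseteq E_{A+1}$, and their union is precisely the event that the lag eventually vanishes, which is the success event $E_{\infty}$ of the threshold-free Nakamoto attack from \cite{GPM17}. Continuity of probability from below then yields $P_{\infty}(z) = \lim_{A\to\infty} P_A(z)$.

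Next I would substitute the closed form of Theorem \ref{propa} and pass to the limit. Since $0<q<1/2<p$, we have $\lambda=q/p<1$, hence $\lambda^{A+1}\to 0$ as $A\to\infty$. The formula gives
$$
P_{\infty}(z) \;=\; \lim_{A\to\infty}\frac{I_{4pq}(z,1/2)-\lambda^{A+1}}{1-\lambda^{A+1}} \;=\; I_{4pq}(z,1/2),
$$
which is the stated identity.

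There is essentially no obstacle here: the only mild point is the matching of definitions that identifies $\infty$-Nakamoto with the threshold-free strategy analyzed in \cite{GPM17}, and the routine verification that the monotone limit of events is compatible with the explicit formula of Theorem \ref{propa}.
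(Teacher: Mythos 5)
Your proposal is correct and follows the same route as the paper, which obtains the corollary simply by letting $A\to+\infty$ in the formula of Theorem \ref{propa} so that $\lambda^{A+1}\to 0$. The extra step you include—identifying the threshold-free success event as the increasing union of the events $E_A$ and invoking continuity from below—is a sound (and slightly more careful) justification of the identification $P_\infty(z)=\lim_{A\to\infty}P_A(z)$ that the paper leaves implicit.
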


\medskip
A major Corollary of this result is obtained taking the asymptotics, 

\begin{corollary}[G.-P.M., 2017]
When $z\to +\infty$, we have
$$
P_\infty(z)\sim \frac{s^z}{\sqrt{\pi (1-s)z}}
$$
where $s=4pq<1$
\end{corollary}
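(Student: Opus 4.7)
The plan is to apply the previous corollary $P_\infty(z) = I_{4pq}(z,1/2)$ and extract the leading behaviour of the regularized incomplete Beta function as its first parameter tends to infinity. Writing out the definition with $s=4pq$,
$$
P_\infty(z) = \frac{\Gamma(z+1/2)}{\Gamma(z)\,\Gamma(1/2)} \int_0^s t^{z-1}(1-t)^{-1/2}\,dt,
$$
so the analysis splits into two independent asymptotic pieces: the Gamma prefactor and the truncated integral.

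For the prefactor, I would use Stirling's formula (or the classical ratio estimate for Gamma quotients), which gives $\Gamma(z+1/2)/\Gamma(z) \sim \sqrt{z}$ as $z\to+\infty$. Combined with $\Gamma(1/2)=\sqrt{\pi}$, this contributes a factor $\sqrt{z/\pi}$.

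For the integral $J(z):=\int_0^s t^{z-1}(1-t)^{-1/2}\,dt$, the integrand concentrates at the right endpoint $t=s$ since $s<1$. The cleanest way to extract the leading term is integration by parts with $u=(1-t)^{-1/2}$ and $dv=t^{z-1}\,dt$:
$$
J(z) = \frac{s^z}{z\sqrt{1-s}} - \frac{1}{2z}\int_0^s t^z (1-t)^{-3/2}\,dt.
$$
A second integration by parts shows that the remainder integral is bounded by a constant times $s^z/z$, so it contributes only at order $s^z/z^2$. Hence $J(z) \sim \frac{s^z}{z\sqrt{1-s}}$. (Alternatively, the substitution $t=s\,e^{-u/z}$ reduces $J(z)$ to a Laplace-type integral and yields the same leading term from Watson's lemma.)

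Multiplying the two asymptotics yields
$$
P_\infty(z) \sim \sqrt{\frac{z}{\pi}} \cdot \frac{s^z}{z\sqrt{1-s}} = \frac{s^z}{\sqrt{\pi(1-s)z}},
$$
as claimed. The only delicate point is justifying that the remainder in the integration by parts is genuinely of lower order uniformly in $z$, but this is immediate since $(1-t)^{-3/2}$ is bounded on $[0,s]$ by $(1-s)^{-3/2}$, giving the trivial bound $J(z) = s^z/(z\sqrt{1-s}) + O(s^z/z^2)$. No other step requires more than Stirling and one integration by parts.
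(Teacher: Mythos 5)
Your argument is correct and follows the route the paper intends: the paper does not reprove this corollary here but cites \cite{GPM17}, where the asymptotic is likewise extracted from the integral representation of $I_{4pq}(z,1/2)$ --- endpoint concentration of $\int_0^s t^{z-1}(1-t)^{-1/2}\,dt$ at $t=s$ via integration by parts, combined with $\Gamma(z+1/2)/\Gamma(z)\sim\sqrt{z}$ and $\Gamma(1/2)=\sqrt{\pi}$. Your error control (the trivial bound $(1-t)^{-3/2}\le(1-s)^{-3/2}$ on $[0,s]$ making the remainder $O(s^z/z^2)$ against a leading term of order $s^z/z$) is sufficient and complete.
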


This Corollary is important because it proves the profusely cited and 
``well-known'' result  (but nowhere correctly proved before) 
that this probability decays exponentially to $0$ with the number of confirmations $z$,
hence the probability of a reorganisation of Bitcoin blockchain decays exponential 
with the depth. This is a fundamental result for Bitcoin security.

Observe that $P_A(z)$ decreases with $A$, and $P_A(z)< \lim_{A\to +\infty} P_A(z) = P_\infty (z)$ as expected. 
We also have when $z\to +\infty$
$$
P_A(z)\sim \frac{s^z}{\sqrt{\pi (1-s)z}}
$$
with an asymptotic that is independent of $A\geq z$.
In the next Theorem we make also use of the Beta function 
$$
B(a,b)=\int_0^1 t^{a-1}(1-t)^{b-1} \, dt=\frac{\Gamma(a) \Gamma(b)}{\Gamma(a+b)} \ .
$$
 
The main result in this article is the  computation of the Revenue Ratio 
$\Gamma_A$. 
We compute exact formulas for $\EE[\mathbf{R}_A]$ and $\EE[\mathbf{T}_A]$.


\begin{theorem}\label{thm:main}
With the previous notations, the expected revenue and the expected duration 
of  the $A$-Nakamoto double spend strategy is
    \begin{align*}
	\frac{\EE[\mathbf{R}_A]}{b} &=
	\frac{q z}{2 p} I_{4pq}(z,1/2)
	- \frac{(A+1)\lambda^{A+1}}
	{p(1-\lambda)^3 [A+1]^2} I_{(p-q)^2}(1/2,z) 
	+\frac{2-\lambda+\lambda^{A+2}}{(1-\lambda)^2[A+1]}
	\frac{p^{z-1}q^{z}}{B(z,z)} \\
	& \ \ + P_A(z) \left (\frac{v}{b}+1\right ) \\
   \frac{ \EE[\mathbf{T}_{A}]}{\tau_0} &=
	\frac{z}{2p} I_{4pq}(z,1/2) +
	\frac{A+1}{p(1-\lambda)^2[A+1]}
	I_{(p-q)^2}(1/2,z)  
    - \frac{p^{z-1} q^z}{p(1-\lambda) B(z,z)} 
    +\frac{1}{q}
    \end{align*}  
with the notation $[n]=\frac{1-\lambda^n}{1-\lambda}$ for $n\in\NN$.
\end{theorem}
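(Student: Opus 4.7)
The plan is to decompose the $A$-Nakamoto attack into three phases and compute $\EE[\mathbf{R}_A]$ and $\EE[\mathbf{T}_A]$ by conditioning on each phase, using the Poisson-race description of the mining competition. Phase 0 is the attacker waiting to mine his initial pre-mined block; by the memoryless property this contributes expected duration $\tau_0/q$ (the $1/q$ term in the time formula) and one block to the attacker chain on success. Phase 1 is the race while honest miners produce $z$ blocks: the number $M_1$ of attacker blocks during this phase is negative binomial with $P(M_1=k)=\binom{z+k-1}{k}p^zq^k$. If $M_1\geq z$, the attacker's chain including the pre-mine is already strictly longer than honest at the end of Phase 1 and the attack succeeds. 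Otherwise Phase 2 starts from lag $d=z-M_1-1\in\{0,\ldots,z-1\}$ and runs as a biased random walk until the lag hits $-1$ (success) or $A$ (failure). The hypothesis $A\geq z$ guarantees the give-up barrier cannot be reached during Phase 1.

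Phase 2 quantities are classical gambler's ruin on $\{-1,\ldots,A\}$. I would solve the standard linear recurrences in closed form: the success probability is $\alpha(d)=(\lambda^{d+1}-\lambda^{A+1})/(1-\lambda^{A+1})$, the expected duration $h(d)$ solves $h(d)=1+qh(d-1)+ph(d+1)$ with $h(-1)=h(A)=0$, and the expected attacker blocks on success use the deterministic identity $M_2-N_2=d+1$; all three are explicit rational functions of $\lambda^{d+1}$ and $\lambda^{A+1}$. The Poisson-race property gives $\EE[\mathbf{T}_A]=\tau_0/q+\tau_0\,\EE[z+M_1+M_2+N_2]$, and the definition of revenue gives $\EE[\mathbf{R}_A]=(v+b)P_A(z)+b\,\EE[M\mathbf{1}_{\text{success}}]$ with $M=M_1+M_2$, since on success the attacker collects $M+1$ block rewards plus the transaction value $v$, and nothing on failure. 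Conditioning on $M_1$ reduces every expectation to a weighted sum
\[
\sum_{k=0}^{z-1}\binom{z+k-1}{k}p^zq^k\,\varphi(z-k-1,A)
\]
with $\varphi$ a rational function of $\lambda^{d+1}$ and $\lambda^{A+1}$ coming from the gambler's ruin step, plus a tail from $\sum_{k\geq z}\binom{z+k-1}{k}p^zq^k$ accounting for Phase 1 success.

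The hard step is identifying these weighted negative-binomial sums with values of the regularized incomplete Beta function. The essential inputs are Theorem \ref{propa} in the $A=\infty$ limit, $I_{4pq}(z,1/2)=P_\infty(z)$, the complementation identity $I_{4pq}(z,1/2)+I_{(p-q)^2}(1/2,z)=1$ obtained from the substitution $t=1-u^2$ in the Beta integral, and differentiation of these identities in $q$ to extract the Beta-density term $\frac{p^{z-1}q^z}{B(z,z)}$, which is $q$ times the Beta$(z,z)$-density at $q$ and appears here as the boundary summand at $k=z-1$. The parts of $\varphi$ independent of $d$ produce $I_{4pq}(z,1/2)$ summands; the parts carrying $\lambda^{A+1}$ produce $I_{(p-q)^2}(1/2,z)$ summands after complementation; and the boundary edge produces the Beta-density summand. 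The factors $[A+1]^{-1}$, $(1-\lambda)^{-2}$ and $(1-\lambda)^{-3}$ appearing in the statement arise from summing the geometric series in $\lambda^{d+1}$ weighted by the negative-binomial mass out of the gambler's ruin formulas. The only real risk is algebraic error in this bookkeeping.
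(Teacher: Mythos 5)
Your proposal is correct and follows essentially the same route as the paper: the same three-phase decomposition, conditioning on the negative-binomial count of attacker blocks mined while the honest chain reaches $z$, closed-form gambler's-ruin quantities (your parity identity $M_2-N_2=d+1$ combined with the success-restricted expected duration is exactly Stern's conditional-duration formula, which the paper cites), and the identification of the weighted negative-binomial sums with incomplete Beta values via differentiation of $I_p(z,m)$ and the complementation identity $I_{4pq}(z,1/2)+I_{(p-q)^2}(1/2,z)=1$. The only presentational differences are that you solve the ruin recurrences directly and use Wald's identity for the expected duration where the paper invokes its earlier martingale/Doob computations; this changes nothing of substance.
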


 As in the original article \cite{GPM2018}, and subsequent application to other block withholding strategies
 \cite{GPM2018}, \cite{GPM2018-2}, \cite{GPM2018-3}, the main tool in the proof of this 
 Theorem are martingale techniques and the application of Doob's Stopping Time Theorem. 
 Previous techniques using Markov chains do not allow 
 the computation of the expected duration of the attack, which is critical for the profitability analysis.
 The profitability analysis is based on attack cycles, modeled by games with repetition, and only holds
 for integrable games, i.e. those that have finite expectation duration of cycles $\EE[\mathbf{T}] <+\infty$
 and allow the application of Doob's Theorem.

 Observe that, all other parameters being fixed, we have the asymptotics when $A\to +\infty$,
 $$
\frac{ \EE[\mathbf{T}_A]}{\tau_0} \sim \frac{I_{(p-q)^2}(1/2,z)}{p-q}\, A
$$
and 
$$
\lim_{A\to +\infty} \frac{\EE[\mathbf{R}_A]}{b} = \frac{\EE[\mathbf{R}_\infty]}{b} = \frac{q z}{2 p}
	I_{4pq}(z,1/2)+\frac{2-\lambda}{1-\lambda}\,\frac{p^{z-1} q^{z}}{B(z,z)}+P_{\infty}(z)\,\left (\frac{v}{b}+1 \right )
$$
In particular we have, 
$$
\lim_{A\to \infty} \EE[\mathbf{T}_A] =\EE [\mathbf{T}_{\infty}] =+\infty
$$
and
$$
\lim_{A\to \infty} \EE[\mathbf{R}_A] = \EE [\mathbf{R}_{\infty}] <+\infty
$$

Hence, in the non-stopping Nakamoto double spend strategy where $A=+\infty$, we have $\Gamma_{\infty} =0$ and any integrable strategy beats Nakamoto non-stopping strategy. Moreover, 
since $\EE [\mathbf{R}_{\infty}] <+\infty$ and $\EE [\mathbf{T}_{\infty}] =+\infty$ Nakamoto's strategy leads to almost sure ruin when considering mining costs.

Another interesting asymptotic is when $q\to 0$, $z\geq 1$,
$$
I_{4pq}(z,1/2)\sim 2\binom{2z-1}{z} q^z
$$
And if we assume $A \geq 2$, $A\geq z\geq 1$, 
$$
\frac{ \EE[\mathbf{R}_A]}{b} \sim \left [2\binom{2z-1}{z} \left (\frac{v}{b}+1\right )
+\frac{2}{B(z,z)} \right ] q^z
$$
and
$$
\frac{ \EE[\mathbf{T}_A]}{\tau_0}  \sim \frac{1}{q} 
$$
Therefore, we have when $q\to 0$,
$$
\Gamma_A \sim \frac{b}{\tau_0} \, \left [2\binom{2z-1}{z} \left (\frac{v}{b}+1\right )
+\frac{2}{B(z,z)} \right ] q^{z+1} 
$$
It is remarkable that this asymptotic is uniform on $A$. 
From this we prove the following practical Corollary. The $A$-Nakamoto double spend is 
profitable when $\Gamma_A \geq \Gamma_H$ and plugging the asymptotics in this profitability 
inequality we get,

\begin{corollary}
When $q\to 0$ the minimal amount to make profitable a Nakamoto double spend with $z\geq 1$ confirmations
is asymptotically
$$
v\geq \frac{q^{-z}}{2\binom{2z-1}{z}} \, b =v_0 \ .
$$
\end{corollary}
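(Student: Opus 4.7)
The plan is to substitute the asymptotic formulas for $\Gamma_A$ and $\Gamma_H$ into the profitability inequality $\Gamma_A\geq \Gamma_H$ and solve for $v$ at leading order as $q\to 0$.

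First I would write $\Gamma_H = qb/\tau_0$ and invoke the asymptotic
$$
\Gamma_A \sim \frac{b}{\tau_0}\left[2\binom{2z-1}{z}\left(\frac{v}{b}+1\right) + \frac{2}{B(z,z)}\right] q^{z+1}
$$
derived just above the Corollary from Theorem \ref{thm:main}. Plugging both into $\Gamma_A\geq \Gamma_H$ and cancelling the positive common factor $qb/\tau_0$ reduces the profitability condition, at leading order, to
$$
\left[2\binom{2z-1}{z}\left(\frac{v}{b}+1\right) + \frac{2}{B(z,z)}\right] q^{z} \geq 1.
$$

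Second, I would isolate $v/b$: solving for it gives
$$
\frac{v}{b} \geq \frac{q^{-z}}{2\binom{2z-1}{z}} - 1 - \frac{1}{\binom{2z-1}{z}\,B(z,z)}.
$$
As $q\to 0$ the divergent term $q^{-z}/(2\binom{2z-1}{z})$ dominates the two bounded correction terms on the right, so the asymptotic threshold becomes simply $v\geq v_0 = \frac{q^{-z}}{2\binom{2z-1}{z}}\,b$, which is the stated inequality.

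There is no serious technical obstacle: the derivation is entirely algebraic once Theorem \ref{thm:main} and the preceding asymptotic expansion of $\Gamma_A$ are granted. The only conceptually delicate point, which I would emphasise briefly, is that the asymptotic for $\Gamma_A$ used above is \emph{uniform} in $A\geq z$ (with $A\geq 2$); this uniformity is exactly what allows the threshold $v_0$ to be stated without any reference to the give-up parameter $A$ and to be interpreted as the intrinsic minimal profitable amount for a Nakamoto double spend in the small-hashrate limit.
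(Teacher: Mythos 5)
Your proposal is correct and follows exactly the route the paper itself takes: the paper derives the Corollary by ``plugging the asymptotics in this profitability inequality'' $\Gamma_A\geq\Gamma_H$, i.e.\ precisely the substitution, cancellation of $qb/\tau_0$, and isolation of the divergent term $q^{-z}/\bigl(2\binom{2z-1}{z}\bigr)$ that you carry out. Your added remark on the uniformity in $A$ is also the point the paper itself emphasises just before stating the Corollary, so there is nothing to add or correct.
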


For example, with $q=0.01$ and only $z=1$ we need to double spend more than $v_0/b=50$ coinbases. 
For the optimal strategy, the minimal spend for these parameters is $v_0/b = 49.2513$ coinbases as we will see in another article.
With the actual reward of $b=12.5 \, \bitcoinA$ and the actual prize of $\$ 8.600$, this represents more than $\$ 5.375.000$. With  $z=2$ this will be more than $1.666$ coinbases or more 
than more than $179$ million dollar. 
These figures are far from the general belief. Nevertheless, we should note 
that there are sharper strategies and, if ran continuously, we can merge double spend attacks with  
other blockwithholding strategies that exploit the difficulty adjustment formula and this will increase the profitability.

 We observe that since  $\Gamma_A\to 0$ when $A\to +\infty$, there is a value $A_0=A_0(q,v,z)\geq z$ that maximizes the revenue ratio:
$$
\Gamma_{A_0}=\max_{A\geq z} \Gamma_A
$$
Also, $\lim_{z\to +\infty} \Gamma_{A_0} < \Gamma_H$, so
given the amount of the purchase (in coinbase $b$ units), we can compute the number $z$ of confirmations 
that make the $A$-Nakamoto double spend attack non-profitable. This is an important data for the vendor or the 
recipient of the transaction that can set the number of confirmations $z$ accordingly to our formulas.

 We keep the analysis for this simplest strategy as simple as possible. 
 We assume no difficulty adjustment during the attack, and 
 instant block propagation in the network. Other more sophysticated strategies, where $A\leq z$, with 
 important pre-mining, or the optimal strategy, or other hybrid strategies wich combine selfish mining and double spend will be analyzed elsewhere.

 The techniques at the core of the results presented in this article are a combination the techniques developed in \cite{GPM17} and \cite{GPM2018}.

\newpage

\section{Nakamoto double spend strategy.}\label{sec:A-Nakamoto}

\medskip

Let $z$ be the number of confirmations required by the merchant and $v$ the
value of a double spend. We fix a maximal lag $A\geq z$ relative to the public blockchain after which the attacker gives-up.
The relative hashrate of the attacker (resp. honest miners) is $q$ (resp. $p$).
Apparently Nakamoto in \cite{N08} wants to prevent pre-mining by the attacker 
but the generation of fresh keys as he proposes does not prevent it. Also his formulas are only correct
pre-mining one block, for example when he states that the probability 
is $1$ for $z=0$ confirmations. The reader can consult \cite{GPM2019} for a discussion on Section 11 of \cite{N08}. Pre-mining one block is often 
named as a ``Finney attack'' because of the clarification that H. Finney provided in 
2011 (see \cite{Finney} bitcointalk post). We can generalize the preparation of the attack  
by pre-mining an arbitrary number $k$ of blocks before launching the attack.  The precise algorithm employed by the 
attacker in this $(A,k)$-Nakamoto double spend strategy is the following:

\medskip
\medskip

\textit{\centerline{$(A,k)$-Nakamoto double spend strategy}}
\textit{
\begin{itemize}
  \item[0.] Start of the attack cycle (goto $1$).
  \item[1.] The attacker mines honestly on top of the official blockchain $k$ blocks with a transaction that returns the payment funds to an address he controls (goto $2$).
  \item[2.] If  the honest miners take advantage before he pre-mines $k$ blocks, then he restarts mining on top of the 
  new last block of the official blockchain (goto $1$). 
  \item[3.] If he succeeds in pre-mining $k$ blocks lading the honest miners, he keeps his fork secret, sends the purchasing transaction to the vendor, and keeps up mining on his secret fork. (goto $4$).
  \item[4.] If the lag with the official blockchain gets larger than $A$ then the attacker gives-up and the double spend fails (goto $6$).
  \item[5.] If the secret fork of the attacker gets longer than the official blockchain that has added $z$ confirmations to the vendor transaction,  
  then the  attacker releases his fork and the double spend is successful (goto $6$).
  \item[6.] End of the attack cycle (goto $0$).  
\end{itemize}
}

We assume that at $z$ confirmations the attacker receives the goods from the vendor. 
Hence, when the attack is successful, the revenue is $v$ plus 
all the block rewards. When the attack fails, the revenue is $0$ (we assume that he can recover the original bitcoins from the purchase). 
A fundamental observation for the application of the profitability model is that the cost per unit of time is the same as the cost of mining honestly. 
Each time the attacker goes to step $0$, he can start a new attack cycle that ends when he reaches step $6$.

We observe that the strategy has three phases:
\begin{itemize}
  \item The first phase is a pre-mine phase (steps $1-2$).
  \item The attacker sends a conflicting transaction to a merchant and mines on his secret fork until the honest miners have validated $z$ blocks. 
  \item The attacker keeps on mining on his secret fork until his lag is $A$ or his fork gets longer than the official blockchain.
\end{itemize}  

During the second phase, the attacker's lag is always less than $A$ since we consider in this article that we choose $A\geq z$. 
So, the attack cycle cannot terminate before the end of the second phase.
Notice also that there are more general Nakamoto strategies by changing the algorithm in the pre-mining phase, and the last phase. 
The simple $(A,1)$-Nakamoto strategy seems to be the closest profitable strategy 
to the one described by Nakamoto's Bitcoin paper and is the one we study in this article. 
Furthermore, as we will show, there are simple closed forms formulas for this strategy.

The study of the general $(A,k)$-Nakamoto strategy is postponed to another article 
as well as the general optimal strategy attack.

\section{Probability of Success}

We use the same notations and mining model from \cite{GPM17}.
The number of blocks mined by the attacker is a Poisson process $(N'(t))_{t\geq 0}$ and $\mathbf{S}_z$ is the random variable of the time taken to the honest miners to mine $z$ blocks. 
The random variable $N'(\mathbf{S}_z)$  is a $(z,p)$ negative binomial random variable \cite{GPM17}, for $j\geq 0$, 
$$
\PP[N'(\mathbf{S}_z)=j]=p^z q^j \binom{z+j-1}{j}
$$
Recall the basic Euler identity for the Beta function which justifies the Beta distribution,
$$
B(a,b)=\int_0^1 t^{a-1}(1-t)^{b-1} \, dt = \frac{\Gamma(a) \Gamma(b)}{\Gamma(a+b)}
$$
We need a couple more of combinatorial identities stated in the next Lemma.
\begin{lemma}\label{lemnegbin}
For integers $m\geq 1$ and $z\geq 0$, and for  $p, q >0 $ with $q=1-p$, we have,
  \begin{align}
	\label{negbin1}    
    \sum_{j = 0}^{m - 1} p^z q^j \binom{z+j-1}{j} &=  I_p(z,m) \\
    \label{negbin2}
    \sum_{j = 0}^{m - 1} p^z q^j \binom{z+j-1}{j}\cdot j & = 
    \frac{q z}{p} I_p(z,m) -\frac{p^{z-1} q^{m}}{B(z,m)}
  \end{align}
\end{lemma}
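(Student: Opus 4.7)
The plan is to prove the two identities together by first establishing a purely algebraic relationship that expresses the weighted sum in \eqref{negbin2} in terms of the unweighted sum in \eqref{negbin1}, and then showing by a derivative argument that the unweighted sum agrees with $I_p(z,m)$. Set $a_j(p) = \binom{z+j-1}{j} p^z q^j$ (with $q=1-p$), $T(p) = \sum_{j=0}^{m-1} a_j(p)$ and $S(p) = \sum_{j=0}^{m-1} j\,a_j(p)$. Identity \eqref{negbin1} asserts $T(p)=I_p(z,m)$, and \eqref{negbin2} asserts $S(p) = \frac{qz}{p} I_p(z,m) - \frac{p^{z-1}q^m}{B(z,m)}$.

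The first step is to express $S$ in terms of $T$. The elementary binomial identity $j\binom{z+j-1}{j}=(z+j-1)\binom{z+j-2}{j-1}$ yields the one-step recursion $j\,a_j = (z+j-1)\,q\,a_{j-1}$ for $j\geq 1$. Summing, $S = q\sum_{i=0}^{m-2}(z+i)a_i = qz(T-a_{m-1}) + q(S-(m-1)a_{m-1})$. Solving for $S$ and rewriting the boundary term via $(z+m-1)\binom{z+m-2}{m-1} = \Gamma(z+m)/(\Gamma(z)\Gamma(m)) = 1/B(z,m)$ produces the key algebraic relation
$$
S(p) = \frac{qz}{p}\,T(p) - \frac{p^{z-1}q^m}{B(z,m)},
$$
which so far uses only the definitions of $T$ and $S$.

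Next I would identify $T(p)$ with $I_p(z,m)$ by comparing derivatives. Differentiating term by term and using $\tfrac{d}{dp}\log(p^z q^j) = \tfrac{z}{p}-\tfrac{j}{q}$ gives $T'(p) = \tfrac{z}{p}T(p) - \tfrac{1}{q}S(p)$, and substituting the displayed relation collapses the two $T$-pieces and leaves
$$
T'(p) = \frac{p^{z-1}q^{m-1}}{B(z,m)} = \frac{d}{dp}I_p(z,m).
$$
Since both $T$ and $I_p(z,m)$ are polynomials in $p$ vanishing at $p=0$ when $z\geq 1$ (the case $z=0$ is trivial, as only the $j=0$ term survives and equals $1$ on both sides), the two functions coincide, proving \eqref{negbin1}. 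Substituting $T=I_p(z,m)$ back into the displayed algebraic relation then yields \eqref{negbin2}.

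The only mildly delicate point is recognising that the boundary term $(z+m-1)\,a_{m-1}$ simplifies to $p^z q^{m-1}/B(z,m)$; everything else is bookkeeping. An alternative route would be induction on $m$ for \eqref{negbin1} via integration by parts in the Beta integral, but the derivative approach has the advantage of producing, as a byproduct, the exact boundary constant appearing in \eqref{negbin2}, handling both identities in a single unified argument.
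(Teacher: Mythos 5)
Your proof is correct, and it is organised differently from the paper's. The paper simply cites \eqref{negbin1} as a classical fact (Abramowitz--Stegun, DLMF) and then obtains \eqref{negbin2} by differentiating the already-known identity $T(p)=I_p(z,m)$ term by term and solving for the weighted sum $S$. You share with the paper the differentiation identity $T'=\tfrac{z}{p}T-\tfrac{1}{q}S$, but you reverse the logical flow: you first derive the purely algebraic relation $S=\tfrac{qz}{p}T-\tfrac{p^{z-1}q^m}{B(z,m)}$ from the recursion $j\,a_j=(z+j-1)\,q\,a_{j-1}$ (which I checked, including the boundary constant $(z+m-1)\binom{z+m-2}{m-1}=1/B(z,m)$ and the $m=1$ and $z=0$ edge cases), and only then feed it into the differentiation identity to conclude $T'=p^{z-1}q^{m-1}/B(z,m)=\tfrac{d}{dp}I_p(z,m)$, pinning down $T$ by the initial condition at $p=0$. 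What this buys you is a self-contained proof of both identities, including the ``classical'' \eqref{negbin1}, at the modest cost of one extra combinatorial lemma; the paper's version is shorter but rests on an external reference. Both arguments are sound, and your back-substitution correctly recovers \eqref{negbin2}.
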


\begin{proof}
The first identity is classical (see \cite{AS}, (6.6.3) and (26.5.26), or \cite{DLMF}, (8.17.24), or \cite{GPM17} section 6). 
The second equation follows from the first one differentiating  with respect to $p$,
  $$
\frac{\partial I_p(z,m)}{\partial p}= \frac{z}{p} \, I_p(z,m)-\frac{1}{q} 
\sum_{j = 0}^{m - 1} p^z q^j \binom{z+j-1}{j}\cdot j     
  $$
and observing that  
$\displaystyle\frac{\partial I_p(z,m)}{\partial p}=\frac{p^{z-1} q^{m-1}}{B(z,m)}$\ .
\end{proof}

\begin{proposition}\label{prop:formulas}
If ${\mathbf{X}}$ is a negative binomial random variable with parameters $(p,z)$, then 
we compute, 
  \begin{align}
  \label{negbin3} \sum_{j = 0}^{z - 1} \PP[{\mathbf{X}}=j] & = I_p(z,z)\\
  \label{negbin4} \sum_{j = 0}^{z - 1} \PP[{\mathbf{X}}=j] \left(q/p\right)^{z-j} & = I_q(z,z)\\
  \label{negbin5} \sum_{j = 0}^{z - 1} \PP[{\mathbf{X}}=j]\cdot j & = \frac{q z}{p} I_p(z,z) -\frac{p^{z-1} q^{z}}{B(z,z)}\\
  \label{negbin6} \sum_{j = 0}^{z - 1} \PP[{\mathbf{X}}=j]\,  j \left(q/p\right)^{z-j}& =\frac{p z}{q} I_q(z,z) -\frac{q^{z-1} p^{z}}{B(z,z)}    
  \end{align}
\end{proposition}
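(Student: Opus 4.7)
My plan is to derive all four identities directly from Lemma \ref{lemnegbin} applied with $m = z$, using a single algebraic observation to handle the two identities involving the weight $(q/p)^{z-j}$.

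First I would write out the probability mass function $\PP[\mathbf{X}=j] = p^z q^j \binom{z+j-1}{j}$. Then \eqref{negbin3} is nothing more than the specialization $m=z$ of \eqref{negbin1}, and \eqref{negbin5} is the specialization $m=z$ of \eqref{negbin2}. These two cases are immediate.

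The key observation for \eqref{negbin4} and \eqref{negbin6} is the symmetry
$$
\PP[\mathbf{X}=j]\,(q/p)^{z-j} = p^z q^j \binom{z+j-1}{j}\cdot \frac{q^{z-j}}{p^{z-j}} = q^z p^j \binom{z+j-1}{j},
$$
i.e.\ multiplying by $(q/p)^{z-j}$ exactly interchanges the roles of $p$ and $q$ in the negative binomial mass (the binomial coefficient is unchanged). Thus the weighted sum on the left of \eqref{negbin4} is the probability mass function of a negative binomial random variable with parameters $(q,z)$ summed from $0$ to $z-1$. Applying \eqref{negbin1} with $p$ and $q$ swapped and $m=z$ yields $I_q(z,z)$. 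Similarly, \eqref{negbin6} follows from \eqref{negbin2} with $p$ and $q$ swapped and $m=z$.

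There is no real obstacle here: the content of the proposition is just the recognition that these four sums are the two identities of Lemma \ref{lemnegbin} specialized to $m=z$, together with their $(p \leftrightarrow q)$ counterparts obtained via the weight $(q/p)^{z-j}$. I would present the verification of the symmetry as the only computational step, and then list the four identities as immediate consequences of Lemma \ref{lemnegbin}.
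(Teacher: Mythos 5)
Your proof is correct and follows exactly the paper's own argument: identities \eqref{negbin3} and \eqref{negbin5} are Lemma \ref{lemnegbin} with $m=z$, and \eqref{negbin4} and \eqref{negbin6} come from the observation that $\PP[\mathbf{X}=j](q/p)^{z-j}$ is the mass function of a $(q,z)$-negative binomial, i.e.\ the $p\leftrightarrow q$ swap. Nothing is missing.
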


\begin{proof}
Identities (\ref{negbin3}) and (\ref{negbin5}) follow from 
Lemma \ref{lemnegbin}. The two other ones follow from these two using, for $j\geq 0$, 
$$
p^z q^j \binom{z+j-1}{j} (q/p)^{z-j} = q^z p^j \binom{z+j-1}{j} 
$$
which means that  $\PP[{\mathbf{X}}=j] \, (q/p)^{z-j}=\PP[{\mathbf{Y}}=j]$
for  ${\mathbf{Y}}$ a $(q,z)$-negative binomial random variable.
\end{proof}
Note also that (\ref{negbin1}) and (\ref{negbin2}) can be restated as 
  \begin{align*}
    \PP[{\mathbf{X}}<m] & = I_p(z,m) \\
    \EE[{\mathbf{X}}|{\mathbf{X}}<m]  & =     
    \frac{q z}{p} -\frac{p^{z-1} q^{m}}{B_p(z,m)}
  \end{align*}
where $B_x(a,b)$ is the incomplete Beta function. Now we prove Theorem \ref{propa}.

\begin{proof}[Proof of Theorem \ref{propa}]
Recall that the attacker has premined one block. So, if he has added $z$ blocks more to his secret fork during the second phase of the attack, then at the end of this phase his secret fork is longer than the official blockchain. In this case, he publishes his fork and
the attack cycle takes end successfully.
Otherwise, the attacker has mined $j$ blocks during the second phase with $j<z$ and he starts a third phase with a lag
of $z-j-1$. The evolution of this lag is a biased random walk 
$({\mathbf Z}_n)$ with a probability $p$ (resp. $q$) to move to the right (resp. left).
The cycle ends when there is $n\in\NN$ such that 
${\mathbf Z}_n = A$ (the attack cycle fails)
or ${\mathbf Z}_n = -1$ (the attack cycle is successful). Hence, according to the gambler's ruin problem formula (see \cite{F}), and using 
formulas (\ref{negbin3}) and (\ref{negbin4}) from Corollary \ref{prop:formulas}, we have
  \begin{align*}
    P_{A} (z) & = \mathbb{P} [N'(\mathbf{S}_z) \geqslant z ] + \sum_{j =
    0}^{z - 1} \mathbb{P} [N'(\mathbf{S}_z) = j] \, \frac{\lambda^{z-j} - \lambda^{A+1}}{1 - \lambda^{A+1}}\\
    & = 1 - \sum_{j = 0}^{z  - 1} \mathbb{P} [N'(\mathbf{S}_z) = j] + \sum_{j =
    0}^{z  - 1} \mathbb{P} [N'(\mathbf{S}_z) = j] \, \frac{\lambda^{z-j} - \lambda^{A+1}}{1 - \lambda^{A+1}}\\
    & = 1 - \left( 1 + \frac{\lambda^{A+1}}{1 - \lambda^{A+1}} \right) \sum_{j = 0}^{z  - 1} \mathbb{P} [N'(\mathbf{S}_z) = j] + \frac{1}{1 - \lambda^{A+1}}  \sum_{j = 0}^{z  - 1} \mathbb{P} [N'(\mathbf{S}_z) = j]\, \lambda^{z - j}\\
    & = 1 - \frac{I_p (z , z)}{1 - \lambda^{A+1}} +
    \frac{I_q(z , z)}{1 - \lambda^{A+1}} 
  \end{align*}
	Finally we use the two classical relations for the incomplete 
	regularized beta function:
	\begin{equation}\label{ipiq}
	I_x(a,b)+I_{1-x}(b,a) = 1
	\end{equation}
for $x\in]0,1[$, $a,b\in\RR_+^*$ and
	\begin{equation}\label{iqi4pq}
	I_q(z,z)= \frac{1}{2} I_{4pq}(z,1/2)
	\end{equation}
See for instance \cite{DLMF} (8.17.4) and (8.17.6).
\end{proof}

\begin{figure}[h]
  \resizebox{300pt}{130pt}{\includegraphics{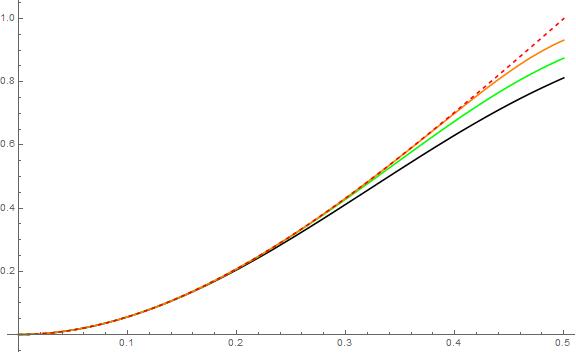}}
  \caption{Graph of $q\mapsto P_A(2)$, $A = 3, 5, 10$ and asymptotics $A\to +\infty$}
\end{figure}

\section{Profitability of the attack.}

\subsection{Expected cycle duration time.}

Note that by definition of the strategy in Section \ref{sec:A-Nakamoto}, 
an attack cycle cannot terminate before the attacker has mined one block (the premined block). So, the duration time of an attack cycle 
$\mathbf{T}$ satisfies $\mathbf{T}=\mathbf{S}'_1+\mathbf{T}'$ where $\mathbf{S}'_1$ is the time before the attacker discovers a new block and $\mathbf{T}'$
is the remaining time of the attack. 
\begin{proposition}
We assume that the attacker has already pre-mined one block. Then, the mean duration time for the end of an attack cycle is 
  $$
    \frac{\EE[\mathbf{T}']}{\tau_0}  = \frac{A+1}{p-q}\cdot 
    \frac{1}{1-\lambda^{A+1}}- \frac{p^{z-1} q^z}{(p-q) B(z,z)} 
    +\left(\frac{z}{p}-\frac{2 (A+1)}{(p-q)(1-\lambda^{A+1})}
    \right) I_q(z,z)
  $$
\end{proposition}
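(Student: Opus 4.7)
The plan is to decompose $\mathbf{T}' = \mathbf{S}_z + \mathbf{T}_3$, where $\mathbf{S}_z$ is the duration of phase two (the time for the honest miners to mine $z$ blocks after the attacker broadcasts the transaction) and $\mathbf{T}_3$ is the duration of phase three, which is zero if the attacker's fork already exceeds the public chain at the end of phase two and otherwise is a gambler's ruin hitting time. The contribution $\EE[\mathbf{S}_z]/\tau_0 = z/p$ is immediate from $\mathbf{S}_z$ being the sum of $z$ i.i.d.\ exponentials of rate $p/\tau_0$.

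For the remaining term I would condition on $j = N'(\mathbf{S}_z)$. If $j \geq z$ then the attacker's secret chain already beats the public chain, so $\mathbf{T}_3 = 0$. For $0 \le j \le z-1$ phase three begins with the attacker lagging by $z-j-1$ blocks and evolves as a $(p,q)$-biased walk absorbed at $-1$ (success) or $A$ (failure); after the shift $Y := \mathrm{lag}+1$ this is the standard gambler's ruin on $\{0,\dots,A+1\}$ with $Y_0 = z-j$. Solving the recurrence $v_i = 1 + p\, v_{i+1} + q\, v_{i-1}$ with $v_0 = v_{A+1} = 0$ gives the expected number of steps
\[
v_i \;=\; \frac{1}{p-q}\left[\frac{(A+1)(1-\lambda^i)}{1-\lambda^{A+1}} - i\right].
\]
By the memoryless property of the merged Poisson process each step of the walk has mean duration $\tau_0$, so $\EE[\mathbf{T}_3 \mid N'(\mathbf{S}_z)=j] = \tau_0\, v_{z-j}$ for $j < z$.

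Assembling the two contributions gives
\[
\frac{\EE[\mathbf{T}']}{\tau_0} \;=\; \frac{z}{p} \;+\; \sum_{j=0}^{z-1} \PP[N'(\mathbf{S}_z)=j]\, v_{z-j}.
\]
Expanding $v_{z-j}$ produces three negative-binomial sums weighted by $1$, $\lambda^{z-j}$ and $(z-j)$. Each is evaluated in closed form by one of the identities (\ref{negbin3}), (\ref{negbin4}), (\ref{negbin5}) of Proposition \ref{prop:formulas}, producing terms in $I_p(z,z)$, $I_q(z,z)$ and $p^{z-1}q^z/B(z,z)$. A final application of the symmetry relation $I_p(z,z) + I_q(z,z) = 1$ (equation (\ref{ipiq})) eliminates $I_p(z,z)$ and collapses the expression to the stated form. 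The main obstacle is not conceptual but algebraic: correctly aligning the indices of the gambler's ruin formula with the $(z-j)$-weighted negative-binomial identities, and noticing the cancellation $z/p - (z/p)I_p(z,z) = (z/p)I_q(z,z)$ that merges the $\mathbf{S}_z$ term with one of the phase-three contributions.
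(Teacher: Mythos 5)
Your proposal is correct and follows essentially the same route as the paper: decompose $\mathbf{T}'=\mathbf{S}_z+\mathbf{T}_3$, condition on $N'(\mathbf{S}_z)=j$, apply the gambler's ruin expected-duration formula to the walk started at $z-j$ with barriers $0$ and $A+1$ (your $v_i$ is exactly the paper's $\EE[\tilde{\mathbf{T}}_{X,Y}]/\tau_0$ with $X+Y=A+1$, $Y=z-j$), and evaluate the resulting sums with Proposition \ref{prop:formulas} and the relation $I_p(z,z)+I_q(z,z)=1$. The only cosmetic difference is that you derive the duration from the discrete recurrence and convert steps to time via the rate-$1/\tau_0$ merged Poisson process (implicitly Wald), whereas the paper cites a continuous-time hitting-time formula from its companion article; the algebra and conclusion agree.
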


\begin{proof}
We follow the proof of Theorem \ref{propa}.
By definition of the strategy, the attack cycle
cannot end before the honest miners have mined $z$ blocks since we consider in this article that $A\geq z$.
So, $\mathbf{T}' \geq \mathbf{S}_z$ (the initial date $t=0$ is the start of the second phase). Moreover, $\mathbf{T}' = \mathbf{S}_z$
if the attacker has mined $z$ blocks or more during the second phase of the attack. Otherwise, 
the attacker tries to build a fork whose length is greater than the official blockchain, starting with an initial lag of $z-\mathbf{N}'(\mathbf{S}_z)-1$ and gives up 
if this lag becomes greater or equal than $A$ (third phase of the attack). 
So, we have:

$$
\mathbf{T}' = \mathbf{S}_z +{\mathbf{1}}_{\mathbf{N}'(\mathbf{S}_z)<z}\cdot {\mathbf{\tilde T}}_{A+1-\left(z-\mathbf{N}'(\mathbf{S}_z)\right),z-\mathbf{N}'(\mathbf{S}_z)}
$$
with $\mathbf{\tilde T}_{X,Y}={\text{Inf }} \lbrace
t\in\RR_+ ; \left(\mathbf{\tilde N}(t)=\mathbf{\tilde N}'(t)+X\right)
\vee \left(\mathbf{\tilde N}'(t)=\mathbf{\tilde N}(t)+Y\right)
\rbrace$ for $X,Y\in\RR$,
$\mathbf{\tilde N}(t)=\mathbf{N}(t+\mathbf{S}_z)-\mathbf{N}(\mathbf{S}_z)$ and
$\mathbf{\tilde N}'(t)=\mathbf{N}'(t+\mathbf{S}_z)-\mathbf{N}'(\mathbf{S}_z)$. 
By the Markov property, $\mathbf{\tilde N}$ and $\mathbf{\tilde N}'$ are two Poisson processes with parameters $\frac{p}{\tau_0}$ and $\frac{q}{\tau_0}$ independent of $\mathbf{S}_z$ and
$\mathbf{\tilde T}_{X,Y}$ is also independent of $\mathbf{S}_z$.
Moreover, we have:
$$
\frac{\EE[\mathbf{\tilde T}_{X,Y}]}{\tau_0}=\frac{X+Y}{p-q}\left(
\frac{1-\lambda^Y}{1-\lambda^{X+Y}}-\frac{Y}{X+Y}
\right)
$$
This computation is classical and  can be found 
in Appendix A of \cite{GPM2018-3} for example (see Theorem A.1). 
So, we have using Proposition \ref{prop:formulas}
together with (\ref{ipiq}) and (\ref{iqi4pq}),
\begin{align*}
    \frac{\EE[\mathbf{T'}]}{\tau_0} & = \EE[\mathbf{S}_z] + 
    \sum_{j = 0}^{z - 1} 
    \mathbb{P} [\mathbf{N}' (\mathbf{S}_z) = j] \cdot
	\EE[\mathbf{\tilde T}_{A+1-(z-j),z-j}]    
	\\
    & = \frac{z}{p} + \frac{A+1}{p-q}\,  
    \left( \frac{1}{1-\lambda^{A+1}}
	-\frac{z}{A+1}\right) 
	\sum_{j = 0}^{z - 1} \mathbb{P} [\mathbf{N}' (\mathbf{S}_z) = j]
	\\
	&\ \ -\frac{A+1}{p-q}\,  
    \left( \frac{1}{1-\lambda^{A+1}}
	\right) 
	\sum_{j = 0}^{z - 1} \mathbb{P} [\mathbf{N}' (\mathbf{S}_z) = j] \, \lambda^{z-j} 
    \\
    &\ \ + 
    \frac{1}{p-q}\, 
    \sum_{j = 0}^{z - 1} \mathbb{P} 
    [\mathbf{N}' (\mathbf{S}_z) = j]\, j
    \\
    &=  
   \frac{z}{p} + \frac{A+1}{p-q}\, 
    \left( \frac{1}{1-\lambda^{A+1}}
	-\frac{z}{A+1}\right) 
	I_p(z,z)-\frac{A+1}{p-q}\cdot  
    \frac{1}{1-\lambda^{A+1}}
	I_q(z,z) \\
	&\ \ +\frac{1}{p-q}\,  
	\left( 
	\frac{q z}{p} I_p(z,z) -\frac{p^{z-1} q^{z}}{B(z,z)}
	\right)
	\\
    &=
	\frac{A+1}{p-q}\,  
    \frac{1}{1-\lambda^{A+1}}- \frac{p^{z-1} q^z}{(p-q) B(z,z)} 
    +\left(\frac{z}{p}-\frac{2 (A+1)}{(p-q)(1-\lambda^{A+1})}\,  
    \right) I_q(z,z)
    \end{align*}
\end{proof}

\subsection{Expected revenue by cycle.}

\begin{proposition}
The expected revenue per cycle  is 
    \begin{equation*}
	\frac{\EE[\mathbf{R}_A]}{b}
	=
	\frac{q z}{2 p}
	I_{4pq}(z,1/2)
	- \frac{(A+1)\lambda^{A+1}}
	{p(1-\lambda)^3 [A+1]^2} I_{(p-q)^2}(1/2,z)
	+\frac{2-\lambda+\lambda^{A+2}}{(1-\lambda)^2[A+1]}
	\frac{p^{z-1}q^{z}}{B(z,z)}
	+ P_A(z) (v+1)
    \end{equation*}  
    with $[A+1]=\frac{1-\lambda^{A+1}}{1-\lambda}$
\end{proposition}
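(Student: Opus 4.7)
The plan is to decompose the revenue phase by phase, in the spirit of the proof of $\EE[\mathbf{T}_A]$ that precedes it. Let $\mathbf{X}_3$ denote the number of blocks mined by the attacker during phase 3 (with the convention $\mathbf{X}_3=0$ when phase 3 does not take place, i.e.\ when $\mathbf{N}'(\mathbf{S}_z)\geq z$), let $\mathbf{Y}_3$ denote the corresponding number of blocks mined by honest miners during phase 3, and let $\mathbf{1}_s$ denote the indicator of success. Because the revenue vanishes on failure and, on success, equals $v$ plus $b$ times the total number of blocks in the attacker's winning fork ($1$ premined, $\mathbf{N}'(\mathbf{S}_z)$ from phase 2, and $\mathbf{X}_3$ from phase 3),
\[
\frac{\EE[\mathbf{R}_A]}{b} \;=\; \Big(\frac{v}{b}+1\Big)\,P_A(z) \;+\; \EE[\mathbf{N}'(\mathbf{S}_z)\,\mathbf{1}_s] \;+\; \EE[\mathbf{X}_3\,\mathbf{1}_s].
\]
For the second term I condition on $\mathbf{N}'(\mathbf{S}_z)=j$, using $\PP[s\mid \mathbf{N}'=j]=1$ for $j\geq z$ and $\PP[s\mid \mathbf{N}'=j]=(\lambda^{z-j}-\lambda^{A+1})/(1-\lambda^{A+1})$ for $j<z$, exactly as in the proof of Theorem~\ref{propa}. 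The resulting sums reduce to the four basic ones tabulated in Proposition~\ref{prop:formulas}.

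For the third term, I condition again on $\mathbf{N}'(\mathbf{S}_z)=j<z$. The lag in phase 3 evolves as a biased random walk $(\mathbf{Z}_n)$ starting at $x=z-j-1$, with up-probability $p$ and down-probability $q$, absorbed at $-1$ (success) or $A$ (failure) at the stopping time $\tau_3$. Since $\mathbf{X}_3+\mathbf{Y}_3=\tau_3$ and $\mathbf{Z}_{\tau_3}-x=\mathbf{Y}_3-\mathbf{X}_3$, on the success event we have $\mathbf{X}_3=(\tau_3+z-j)/2$, hence
\[
\EE[\mathbf{X}_3\,\mathbf{1}_s\mid\mathbf{N}'=j]\;=\;\tfrac{1}{2}\,\EE[\tau_3\,\mathbf{1}_s\mid\mathbf{N}'=j]\;+\;\tfrac{z-j}{2}\,f(z-j-1),
\]
with $f(x)=(\lambda^{x+1}-\lambda^{A+1})/(1-\lambda^{A+1})$ the gambler's-ruin success probability.

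To obtain $\EE[\tau_3\,\mathbf{1}_s\mid \mathbf{N}'=j]$ I apply Doob's optional stopping theorem to \emph{two} martingales. The first is the classical linear martingale $\mathbf{Z}_n-(p-q)n$, which yields $\EE[\tau_3\mid \mathbf{N}'=j]=\EE[\tau_3\mathbf{1}_s\mid\mathbf{N}'=j]+\EE[\tau_3\mathbf{1}_f\mid\mathbf{N}'=j]$ in closed form. The second is the twisted martingale
\[
M_n \;=\; \lambda^{\mathbf{Z}_n}\bigl(\mathbf{Z}_n+(p-q)n\bigr),
\]
whose martingale property follows from the identities $p\lambda=q$ and $q/\lambda=p$. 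Applying Doob to $M_n$ and splitting according to $\mathbf{Z}_{\tau_3}\in\{-1,A\}$ yields a second, linearly independent equation in the two unknowns $\EE[\tau_3\mathbf{1}_s]$ and $\EE[\tau_3\mathbf{1}_f]$, with coefficients $\lambda^{-1}$ and $\lambda^A$. Solving this $2\times 2$ system produces $\EE[\tau_3\mathbf{1}_s\mid\mathbf{N}'=j]$ explicitly as a rational function of $\lambda^{z-j}$ and $\lambda^{A+1}$ with denominator $(1-\lambda^{A+1})^2$; this squared factor is precisely the source of the $[A+1]^2$ in the middle term of the proposition.

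Substituting back and summing over $j<z$, every sum that appears is one of the four types of Proposition~\ref{prop:formulas}. Collecting terms and using the conversions $p(1-\lambda)=p-q$ and $[A+1]=(1-\lambda^{A+1})/(1-\lambda)$, the identity (\ref{iqi4pq}) to rewrite $I_q(z,z)=\tfrac12\,I_{4pq}(z,1/2)$, and the identity (\ref{ipiq}) with $(p-q)^2=1-4pq$ to recognize $1-I_{4pq}(z,1/2)=I_{(p-q)^2}(1/2,z)$, the expression reduces to the stated form. The main obstacle is precisely this concluding bookkeeping step: the coefficient $(2-\lambda+\lambda^{A+2})$ in front of $p^{z-1}q^z/B(z,z)$ only materializes after four separate contributions—two from $\EE[\mathbf{N}'\mathbf{1}_s]$ via (\ref{negbin5})--(\ref{negbin6}), and two from $\EE[\mathbf{X}_3\mathbf{1}_s]$ via the $2\times 2$ system above—are brought to a common denominator and simplified.
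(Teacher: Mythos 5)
Your decomposition of the revenue is the same as the paper's: condition on the number $j$ of attacker blocks in phase 2, treat $j\geq z$ as immediate success, and for $j<z$ multiply the gambler's-ruin success probability by $v/b+1+j+1$ plus the conditional expected number of attacker blocks (left steps) accumulated in phase 3, then reduce all sums to the four identities of Proposition~\ref{prop:formulas} and convert via (\ref{ipiq})--(\ref{iqi4pq}). Where you genuinely diverge is in how you obtain the key ingredient $\EE[\mathcal{L}(\nu)\,|\,\text{success}]$: the paper imports it wholesale as Corollary 2.5 of \cite{GPM2018-3}, itself a consequence of Stern's formula \cite{St75}, whereas you rederive it from scratch by combining the kinematic identity $\mathbf{X}_3=(\tau_3+z-j)/2$ on the success event with optional stopping applied to the pair of martingales $\mathbf{Z}_n-(p-q)n$ and $M_n=\lambda^{\mathbf{Z}_n}(\mathbf{Z}_n+(p-q)n)$; the latter is indeed a martingale (the check uses exactly $p\lambda=q$ and $q/\lambda=p$ as you say), and the resulting $2\times2$ system in $\EE[\tau_3\mathbf{1}_s]$, $\EE[\tau_3\mathbf{1}_f]$ is nondegenerate since $\lambda^{-1}\neq\lambda^{A}$. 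This buys self-containedness and makes transparent where the $(1-\lambda^{A+1})^2=[(1-\lambda)[A+1]]^2$ denominator comes from, at the cost of redoing a known computation; one should just note (as the paper implicitly does) that optional stopping is legitimate here because $\mathbf{Z}_{n\wedge\tau_3}$ is bounded and $\tau_3$ has geometric tails, so $|M_{n\wedge\tau_3}|$ is dominated by an integrable variable. A minor point: you write the success term as $(v/b+1)P_A(z)$, which is the dimensionally correct form appearing in Theorem~\ref{thm:main}, while the Proposition's displayed statement writes $P_A(z)(v+1)$; this is an inconsistency of units internal to the paper, not an error of yours.
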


\begin{proof}
We will use the following notations. If ${\mathbf Z}$ is a biased simple random walk starting at 
${\mathbf Z}_0=k$ with a probability $p$ (resp. $q$) to go right (resp. left), we denote by $\nu_i^k$ with $i\in\ZZ$
the hitting time of $i$ and $\nu_{i,j}^k = \nu_i^k\wedge\nu_j^k$
with $j\in\ZZ$. We also denote by 
${\mathcal{L}}(n)$ the number of steps to the left between $0$ and $n$, that is,
$$
{\mathcal L}(n)=\displaystyle\sum_{i=1}^{n}
{\mathbf 1}_{{\mathbf Z}_{i}={\mathbf Z}_{i-1}-1}.
$$
After the premining phase, the attacker waits for the honest miners to mine $z$ blocks. Suppose that he has mined $j$ blocks during this second phase. If $j\geq z$, then the attack cycle takes end and the attacker has won the double spend amount $v$ and all the $j+1$ blocks he has mined. Otherwise, there is a third phase. The attack cycle still goes on and doesn't
end before the attacker has built a fork whose length 
is greater than the official blockchain or 
his lag became greater or equal than $A$. 
We denote by $\mathbf{Z}_n$ the lag of the attacker plus one at the time when $n$ blocks have been discovered by the attacker or the honest miners since the start of the third phase. Then, 
$\mathbf{Z}_0=z-j$ and $\left(\mathbf{Z}_n\right)_{n\in\NN}$ is a biased simple random walk as above. The attack cycle ends when there is $n$ such that 
$\mathbf{Z}_n=0$ or $\mathbf{Z}_n=A+1$. Therefore, we have:
\begin{align*}
    \frac{\EE[\mathbf{R}_A]}{b}& = \sum_{j=z}^{\infty}
    \PP[\mathbf{N}'(\mathbf{S}_z)=j] (j+1+v)
    +\sum_{j=0}^{z-1}\PP[\mathbf{N}'(\mathbf{S}_z)=j]
	\cdot \PP[\nu_{0,A+1}^{z-j}=\nu_{0}^{z-j}]\\    
    &\cdot (j+1+v+\EE[{\mathcal L}
    (\nu_{0,A+1}^{z-j})|\nu_{0,A+1}^{z-j} = \nu_0^{z-j}])\\
    &=\EE[[\mathbf{N}'((\mathbf{S}_z)]
    -\sum_{j=0}^{z-1}\PP[\mathbf{N}'(\mathbf{S}_z)=j] j
    + P_A(z) (v+1)
	+
	\sum_{j=0}^{z-1}\PP[\mathbf{N}'(\mathbf{S}_z)=j] j
	\cdot \PP[\nu_{0,A+1}^{z-j}=\nu_{0}^{z-j}]
    \\
    &+	\sum_{j=0}^{z-1}\PP[\mathbf{N}'(\mathbf{S}_z)=j]
	\cdot \PP[\nu_{0,A+1}^{z-j}=\nu_{0}^{z-j}]
	\cdot \EE[{\mathcal L}
    (\nu_{0,A+1}^{z-j})|\nu_{0,A+1}^{z-j} = \nu_0^{z-j}] 
    \end{align*}    
Now we use again the classical relation for the gambler's ruin formula
	$\PP[\nu_{0,M}^{m} = \nu_0^{m}]=\frac{\lambda^m - \lambda^M}
	{1-\lambda^M}$ (see for instance \cite{F})
and
\begin{equation*}
\EE[{\mathcal L} (\nu_{0,M}^{m})|\nu_{0,M}^{m} = \nu_0^{m}])=
\frac{m}{2}+
\frac{m\lambda^{m}-\left(2M-m\right)\lambda^{M}
	+\left(2M-m\right)\lambda^{M+m}-m\lambda^{2M}}
	{2 p (1-\lambda) (\lambda^{m}-\lambda^{M})(1-\lambda^{M})}
\end{equation*}
from \cite{GPM2018-3} (See Corollary 2.5) which is a consequence of
Stern's formula \cite{St75}. So, using Proposition \ref{prop:formulas} we compute, 
\begin{align*}
    \frac{\EE[\mathbf{R}_A]}{b}
    &=\frac{q z}{p} + P_A(z) (v+1)\\
	&-\left( 
	\frac{(2(A+1)-z)\lambda^{A+1} + z \lambda^{2(A+1)}}
	{2 p (1-\lambda) (1-\lambda^{A+1})^2}    
	+\frac{\lambda^{A+1}}{1-\lambda^{A+1}}\cdot\frac{z}{2}\right)
	\sum_{j=0}^{z-1}\PP[\mathbf{N}'(\mathbf{S}_z)=j]
	\\
	&\ \ +
	\left(\frac{z+(2(A+1)-z)\lambda^{A+1}}
	{2 p (1-\lambda)(1-\lambda^{A+1})^2}
	+\frac{1}{1-\lambda^{A+1}}\cdot\frac{z}{2}\right)
	\sum_{j=0}^{z-1}\PP[\mathbf{N}'(\mathbf{S}_z)=j] \lambda^{z-j}
	\\	
	&\ \ -
	\left(
	\frac{1}{1-\lambda^{A+1}}
	+\frac{\lambda^{A+1}-\lambda^{2(A+1)}}
	{2 p (1-\lambda)(1-\lambda^{A+1})^2}
	-\frac{\lambda^{A+1}}{1-\lambda^{A+1}}\cdot\frac{1}{2}\right)
	\sum_{j=0}^{z-1}\PP[\mathbf{N}'(\mathbf{S}_z)=j] j
	\\
	&\ \ +	
	\left(
	\frac{1}{1-\lambda^{A+1}}
	-\frac{1-\lambda^{A+1}}
	{2 p (1-\lambda)(1-\lambda^{A+1})^2}
	-\frac{1}{1-\lambda^{A+1}}\cdot\frac{1}{2}\right)
	\sum_{j=0}^{z-1}\PP[\mathbf{N}'(\mathbf{S}_z)=j] j \lambda^{z-j}
\end{align*}    
and
\begin{align*}
    \frac{\EE[\mathbf{R}_A]}{b}
    &=\frac{q z}{p} + P_A(z) (v+1)
	-	\frac{\lambda^{A+1}\left(A+1-q (1-\lambda^{A+1})z\right)}
	{p (1-\lambda) (1-\lambda^{A+1})^2}    
	I_p(z,z)\\
	&+  
    \frac{(A+1)\lambda^{A+1}+p(1-\lambda^{A+1})z}
	{p (1-\lambda)(1-\lambda^{A+1})^2}
	I_q(z,z)
	\\	
	&-
	\frac{p-q+q\lambda^{A+1}}{p (1-\lambda)(1-\lambda^{A+1})}
	\left(
	\frac{q z}{p} I_p(z,z) -\frac{p^{z-1} q^{z}}{B(z,z)}
	\right)-
	\frac{\lambda}{(1-\lambda) (1-\lambda^{A+1})}
	\left(
	\frac{p z}{q} I_q(z,z) -\frac{q^{z-1} p^{z}}{B(z,z)}
	\right)
    \end{align*}    
We note that 
	\begin{equation*}
	\frac{\lambda^{A+1}\left(A+1-q (1-\lambda^{A+1})z\right)}
	{p (1-\lambda) (1-\lambda^{A+1})^2}
	+
	\frac{p-q+q\lambda^{A+1}}{p (1-\lambda)(1-\lambda^{A+1})}
	\cdot
	\lambda z
	= \lambda z + \frac{(A+1)\lambda^{A+1}}
	{p(1-\lambda)(1-\lambda^{A+1})^2}
	\end{equation*}
	So, using again (\ref{ipiq}) and (\ref{iqi4pq}), we get
    \begin{align*}
	\frac{\EE[\mathbf{R}_A]}{b}
	&=
	\left(
	\frac{1}{2}\lambda z + \frac{(A+1)\lambda^{A+1}}
	{p(1-\lambda)(1-\lambda^{A+1})^2}
	\right) I_{4pq}(z,\frac{1}{2})
	+
	P_A(z) (v+1)\\
	&-
	\frac{(A+1)\lambda^{A+1}}
	{p(1-\lambda)(1-\lambda^{A+1})^2}
	+\frac{2-\lambda+\lambda^{A+2}}{(1-\lambda)(1-\lambda^{A+1})}
	\frac{p^{z-1}q^{z}}{B(z,z)}
    \end{align*}    	
\end{proof}

In Figures 2 and 3 we plot the graphs of $q\mapsto \EE[\mathbf{R}_A]$ and $q\mapsto \Gamma_A$
In figure 3 $\Gamma_H$ is the dashed line. 
We have $\lim_{q\to 0.5} \Gamma_{10}(q)=\frac{139}{286}<\frac{1}{2}$ and $\Gamma_{10}(q)<q$ for any $q$. 
So, $(10,1)$-Nakamoto Double Spend strategy with $z=2$ and $v=b$ is
always less profitable than honest mining. 

\begin{figure}[h]
  \resizebox{300pt}{150pt}
  {\includegraphics{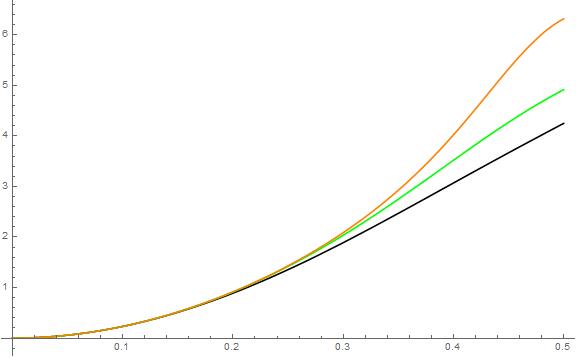}}
  \caption{Graph of $q\mapsto \EE[\mathbf{R}_A]$
  with $z = 2$ and $v=b$ for $A = 3, 5, 10$.}
\end{figure}

\begin{figure}[h]
  \resizebox{300pt}{150pt}
  {\includegraphics{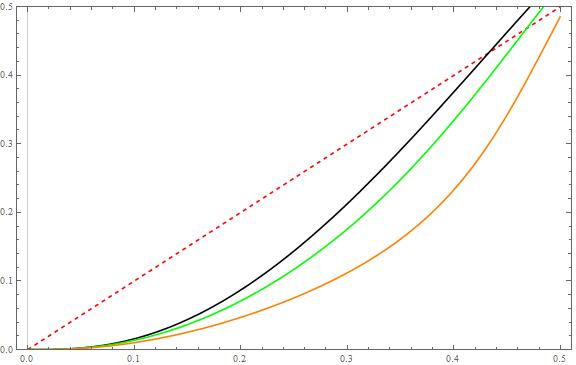}}
  \caption{Graph of $q\mapsto \Gamma_A$, $z = 2$,  $v=b$ for $A = 3,5,10$. }
\end{figure}

\section{Related work}

In \cite{R13} the author proposes the correct formula for the computation of the probability of success of the 
Nakamoto double spend attack from \cite{N08}, which is proved and computed in closed-form using special 
functions in \cite{GPM17}. 
As Corollary of this closed-form formula, it is proved in \cite{GPM17} that the probability decays exponentially to $0$ with 
the number of confirmations $z\to+\infty$. 
In \cite{GZ19} asymptotics at higher orders are computed by more combinatorical methods (classical also from the integral expression in \cite{GPM17}). Also, in this article, the authors have some discussion of 
the initial assumptions of the Nakamoto double spend strategy. It is true that Section 11 of \cite{N08} contains several 
incoherences (we have written a detailed discussion on Section 11 of \cite{N08} in \cite{GPM2019}). All authors consider $z$ to be the number 
of confirmations, which assumes a $1$ block pre-mining (see \cite{R13}, \cite{HT17} or \cite{GPM17}). 
In \cite{SZ16}, the authors look for the best security protocol that a merchant should adopt to counter a double spend attack.
They consider attacks long enough to have an impact on the difficulty adjustment parameter.
 They also propose to merge double spend attacks 
with selfish mining or other blocks witholding strategies 
(it is proven in \cite{GPM2018} that these attacks are profitable on the long term after an adjustment of the difficulty parameter). 
All these articles only consider the double spend attack from the point of view of probabilities.
The duration time of the attack is considered in \cite{GPM2019}. The author computes the conditional probability
density function of the time before an attacker catches up the honest miner knowing that the honest miners have already mined $z$ blocks.
In \cite{HT17}, the authors introduce a profitability setup and
look for the optimal number of blocks that an attacker should pre-mine before launching a double spend attack.
We will answer this question in a forthcoming article.
In \cite{BLOA16}, the authors study the profitability of a 
double-spend attack with a cut-off time strategy,  
$S_{z+1}\wedge S'_{z+1}$ (in our notation). In \cite{JL19} the authors consider a fixed cut-off time  (in case of failure, the attack ends at a fixed time).

What was lacking in the literature is a rigorous model of profitability 
to make exact comparisation of profitabilities of different mining strategies, in particular with the honest strategy, as done in
\cite{GPM2018}. This is what we provide in the present article.

\end{document}